\newtheorem{theorem}{Theorem}
\newtheorem{problem}{Problem}
\newenvironment{proof}[1][Proof]{\begin{trivlist}
\item[\hskip \labelsep {\bfseries #1}]}{\end{trivlist}}
\newenvironment{proofsketch}[1][Sketch of proof]{\begin{trivlist}
\item[\hskip \labelsep {\bfseries #1}]}{\end{trivlist}}
\newcommand{\qed}{\nobreak \ifvmode \relax \else
      \ifdim\lastskip<1.5em \hskip-\lastskip
      \hskip1.5em plus0em minus0.5em \fi \nobreak
      \vrule height0.75em width0.5em depth0.25em\fi}
\begin{document}

\title{\bf Exact Learning of RNA Energy Parameters From Structure}
\author{Hamidreza Chitsaz\footnote{Corresponding author; e-mail: chitsaz@wayne.edu}\\
Department of Computer Science\\
Wayne State University\\
Detroit, MI
\and Mohammad Aminisharifabad\\
Department of Computer Science\\
Wayne State University\\
Detroit, MI}

\maketitle

\begin{abstract}
We consider the problem of exact learning of parameters of a linear RNA energy model from secondary structure data. A necessary and sufficient condition for learnability of parameters is derived, which is based on computing the convex hull of union of translated Newton polytopes of input sequences \cite{Forouzmand13}. The set of learned energy parameters is characterized as the convex cone generated by the normal vectors to those facets of the resulting polytope that are incident to the origin. In practice, the sufficient condition may not be satisfied by the entire training data set; hence, computing a maximal subset of training data for which the sufficient condition is satisfied is often desired. We show that problem is NP-hard in general for an arbitrary dimensional feature space. Using a randomized greedy algorithm, we select a subset of RNA STRAND v2.0 database that satisfies the sufficient condition for separate A-U, C-G, G-U base pair counting model. The set of learned energy parameters includes 
experimentally measured energies of A-U, C-G, and G-U pairs; hence, our parameter set is in agreement with the Turner parameters.
\end{abstract}

\section{Introduction}
The discovery of key regulatory roles of RNA in the cell has recently invigorated interest in RNA structure and RNA-RNA interaction determination or prediction \cite{Storz02,Bartel04,Hannon02,ZamHal05,WagFla02,Bra02,Gottesman05}. Due to high chemical reactivity of nucleic acids, experimental determination of RNA structure is time-consuming and challenging. In spite of the fact that computational prediction of RNA structure may not be accurate in a significant number of cases, it is the only viable low-cost high-throughput option to date. Furthermore, with the advent of whole genome synthetic biology \cite{Gibson10}, accurate high-throughput RNA engineering algorithms are required for both \emph{in vivo} and \emph{in vitro} applications \cite{See05,SeeLuk05,SimDit05,Ven07,Reif08}. 

Since the dawn of RNA secondary structure prediction four decades ago \cite{Tinoco73}, increasingly complex models and algorithms have been proposed. Early approaches considered mere base pair counting \cite{Nus78,WatSmi78}, followed by the Turner thermodynamics model which was a significant leap forward \cite{Zuker81,Mcc90}. Recently, massively feature-rich models empowered by parameter estimation 
algorithms have been proposed. Despite significant progress in the last three decades, made possible by the work of Turner and others \cite{MatTur99} on measuring RNA thermodynamic energy parameters and the work of several groups on novel algorithms \cite{RivEdd99,DirPie03,Chitsaz09,Chitsaz09b,BerTaf06,Siederdissen11,Huang10,Backofen11} and machine learning approaches \cite{Do06,Andronescu10,Zakov11}, the RNA structure prediction accuracy has not reached a satisfactory level yet \cite{Rivas12}. 

Until now, computational convenience, namely the ability to develop dynamic programming algorithms of polynomial running time, and biophysical intuition have played the main role in development of RNA energy models. 
The first step towards high accuracy RNA structure prediction is to make sure that the energy model is inherently capable of predicting every observed structure, but it is not over-capable, as accurate estimation of the parameters of a highly complex model often requires a myriad of experimental data, and lack of sufficient experimental data causes overfitting. Recently, we gave a systematic method to assess that inherent capability of a given energy model \cite{Forouzmand13}. Our algorithm decides whether the parameters of an energy model are \emph{learnable}. The parameters of an energy model are defined to be \emph{learnable} iff there exists at least one set of such parameters that renders \emph{every} known RNA structure to date, determined through X-ray or NMR, the minimum free energy structure. Equivalently, we say that the parameters of an energy model are learnable iff 100\% structure prediction accuracy can be achieved when the training and test sets are identical. Previously, we gave a necessary 
condition for the learnability and an algorithm to verify it. Note that a successful RNA folding algorithm needs to have generalization power to predict unseen structures. We leave assessment of the generalization power for future work. 

In this paper, we give a necessary and sufficient condition for the learnability and characterize the set of learned energy parameters. Also, we show that selecting the maximum number of RNA sequences for which the sufficient condition is satisfied is NP-hard in general in arbitrary dimensions. Using a randomized greedy algorithm, we select a subset of RNA STRAND v2.0 database that satisfies the sufficient condition for separate A-U, C-G, G-U base pair counting model and yields a set of learned energy parameters that includes the experimentally measured energies of A-U, C-G, and G-U pairs.

% You may title this section "Methods" or "Models". 
% "Models" is not a valid title for PLoS ONE authors. However, PLoS ONE
% authors may use "Analysis" 
\section{Methods}

\subsection{Preliminaries}
Let the training set $D = \left\{(x_i, y_i)\ |\ i=1,2,\ldots,n \right\}$ be a given collection of RNA sequences $x$ and their corresponding experimentally observed structures $y$. Throughout this paper, we assume the free energy 
\begin{equation}\label{equ:energy}
G(x, s, \mathbf{h}) := \left\langle c(x, s), \mathbf{h} \right\rangle 
\end{equation}
associated with a sequence-structure pair $(x, s)$ is a linear function of the energy parameters $\mathbf{h} \in \mathbb{R}^k$, in which $k$ is the number of features, $s$ is a structure in $\mathcal{E}(x)$ the ensemble of possible structures of $x$, and $c(x, s) \in \mathbb{Z}^k$ is the feature vector. 

\subsection{Learnability of energy parameters}
The question that we asked before \cite{Forouzmand13} was: does there exist nonzero parameters $\mathbf{h}^\dagger$ such that for every $(x, y) \in D$, $y = \arg\min_s G(x, s, \mathbf{h}^\dagger)$? We ask a slightly relaxed version of that question in this paper: does there exist nonzero parameters $\mathbf{h}^\dagger$ such that for every $(x, y) \in D$, $G(x, y, \mathbf{h}^\dagger) = \min_s G(x, s, \mathbf{h}^\dagger)$? The answer to this question reveals inherent limitations of the energy model, which can be used to design improved models. We provided a necessary condition for the existence of $\mathbf{h}^\dagger$ and a dynamic programming algorithm to verify it through computing the Newton polytope for every $x$ in $D$ \cite{Forouzmand13}. Following our previous notation, let the \emph{feature ensemble} of sequence $x$ be
\begin{equation}
\mathcal{F}(x) := \left\{ c(x, s)\ |\ s \in \mathcal{E}(x) \right\} \subset \mathbb{Z}^k,
\end{equation}
and call the convex hull of $\mathcal{F}(x)$,
\begin{equation}
\mathcal{N}(x) := \mbox{conv}\mathcal{F}(x) \subset \mathbb{R}^k,
\end{equation}
the \emph{Newton polytope} of $x$. We remind the reader that the convex hull of a set, denoted by `conv' here, is the minimal convex set that fully contains the set. Let $(x, y) \in D$ and $0 \not= \mathbf{h}^\dagger \in \mathbb{R}^k$. We previously showed in \cite{Forouzmand13} that if $y$ minimizes $G(x, s, \mathbf{h}^\dagger)$ as a function of $s$, then $c(x, y) \in \partial \mathcal{N}(x)$, i.e. the feature vector of $(x, y)$ is on the boundary of the Newton polytope of $x$. 

In this paper, we provide a necessary and sufficient condition for the existence of $\mathbf{h}^\dagger$. First, we rewrite that necessary condition by introducing a translated copy of the Newton polytope,
\begin{equation}
\mathcal{N}_y(x) := \mathcal{N}(x)\ominus c(x, y) = \mbox{conv}\left\{\mathcal{F}(x)\ominus c(x, y)\right\},
\end{equation}
in which $\ominus$ is the Minkowski difference. The necessary condition for learnability then becomes $0 \in \partial \mathcal{N}_y(x)$.

\subsection{Necessary and sufficient condition for learnability}
The following theorem specifies a necessary and sufficient condition for the learnability.
\begin{theorem}[Necessary and Sufficient Condition]\label{thm:cond}
There exists $0 \not= \mathbf{h}^\dagger \in \mathbb{R}^k$ such that for all $(x, y) \in D$, $y$ minimizes $G(x, s, \mathbf{h}^\dagger)$ as a function of $s$ iff $0 \in \partial \mathcal{N}(D)$, in which
\begin{equation} 
\mathcal{N}(D) := \mbox{\emph{conv}} \left\{ \bigcup_{(x, y) \in D} \mathcal{N}_y(x) \right\} = \mbox{\emph{conv}} \left\{ \bigcup_{(x, y) \in D} \mathcal{F}(x)\ominus c(x, y) \right\}.
\end{equation} 
\end{theorem}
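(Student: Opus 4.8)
The plan is to translate the minimization requirement into one convex-geometry statement and then apply the supporting hyperplane theorem. First I would note that, since $G(x,s,\mathbf{h}) = \langle c(x,s),\mathbf{h}\rangle$ is linear in $\mathbf{h}$, the condition ``$y$ minimizes $G(x,s,\mathbf{h}^\dagger)$ as a function of $s$'' is equivalent to $\langle c(x,s) - c(x,y),\mathbf{h}^\dagger\rangle \ge 0$ for every $s \in \mathcal{E}(x)$, i.e. to $\langle z,\mathbf{h}^\dagger\rangle \ge 0$ for every $z \in \mathcal{F}(x)\ominus c(x,y)$. Taking the conjunction over all $(x,y)\in D$, a single nonzero $\mathbf{h}^\dagger$ works for the whole training set iff it has nonnegative inner product with every point of $\bigcup_{(x,y)\in D}\bigl(\mathcal{F}(x)\ominus c(x,y)\bigr)$, and --- again by linearity of $z\mapsto\langle z,\mathbf{h}^\dagger\rangle$, using that a linear functional is nonnegative on a set iff it is nonnegative on its convex hull --- iff it has nonnegative inner product with every point of $\mathcal{N}(D)$. (The two displayed formulas for $\mathcal{N}(D)$ coincide because the convex hull of a union of convex hulls equals the convex hull of the union.) So the theorem reduces to the claim: there is a nonzero $\mathbf{h}^\dagger$ with $\mathcal{N}(D)\subseteq\{z:\langle z,\mathbf{h}^\dagger\rangle\ge 0\}$ iff $0\in\partial\mathcal{N}(D)$, where $\partial$ denotes the topological boundary in $\mathbb{R}^k$.

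Next I would record two standing facts: $0\in\mathcal{N}(D)$ always, since $c(x,y)-c(x,y)=0$ lies in $\mathcal{F}(x)\ominus c(x,y)$ for any fixed $(x,y)\in D$; and $\mathcal{N}(D)$ is a (compact) polytope, being the convex hull of the finitely many feature vectors arising from the finite ensembles $\mathcal{E}(x)$, $(x,y)\in D$. For the ``if'' direction, assuming $0\in\partial\mathcal{N}(D)$, the origin is not in the interior of $\mathcal{N}(D)$, so the supporting hyperplane theorem at the boundary point $0$ of the convex set $\mathcal{N}(D)$ yields a nonzero $\mathbf{h}^\dagger$ with $\langle z,\mathbf{h}^\dagger\rangle\le\langle 0,\mathbf{h}^\dagger\rangle = 0$ for all $z\in\mathcal{N}(D)$; replacing $\mathbf{h}^\dagger$ by $-\mathbf{h}^\dagger$ then gives exactly the parameter vector sought, via the reduction of the previous paragraph. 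For the ``only if'' direction I would argue the contrapositive: if $0\notin\partial\mathcal{N}(D)$ then, since $0\in\mathcal{N}(D)$ is a point of the closed set $\mathcal{N}(D)$, the origin lies in the interior of $\mathcal{N}(D)$, so $\mathcal{N}(D)$ contains a ball $B(0,\varepsilon)$; then for every nonzero $\mathbf{h}^\dagger$ the point $-\varepsilon\mathbf{h}^\dagger/\|\mathbf{h}^\dagger\|\in\mathcal{N}(D)$ has strictly negative inner product with $\mathbf{h}^\dagger$, so no admissible parameter vector exists.

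The step I expect to need the most care is the degenerate case where $\mathcal{N}(D)$ is not full-dimensional. There I would read $\partial$ as the topological boundary in $\mathbb{R}^k$, so that a lower-dimensional $\mathcal{N}(D)$ has empty interior and hence every one of its points --- in particular $0$ --- is automatically a boundary point; correspondingly the ``only if'' direction is vacuous and the ``if'' direction still goes through, since the affine hull of $\mathcal{N}(D)$ passes through $0$, is a proper linear subspace, and is contained in a hyperplane through $0$ that trivially supports $\mathcal{N}(D)$; concretely one may take any nonzero $\mathbf{h}^\dagger$ orthogonal to that subspace, which annihilates all of $\mathcal{N}(D)$. I would also make sure the ``only if'' argument's appeal to the interior is to the interior relative to $\mathbb{R}^k$ (nonempty precisely when $\mathcal{N}(D)$ is full-dimensional), so that the two readings of the boundary are consistent throughout. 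Beyond this the argument is routine; no compactness or finiteness is actually needed for the equivalence itself, the polytope observation being included only for consistency with the algorithmic development that follows.
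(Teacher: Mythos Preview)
Your proof is correct and follows essentially the same approach as the paper: the ($\Leftarrow$) direction via the supporting hyperplane theorem at $0$, and the ($\Rightarrow$) direction by placing $-\varepsilon\,\mathbf{h}^\dagger/\|\mathbf{h}^\dagger\|$ inside an interior ball of $\mathcal{N}(D)$. Your write-up is marginally cleaner in that you isolate the reduction to the half-space condition $\langle \mathcal{N}(D),\mathbf{h}^\dagger\rangle\ge 0$ up front (the paper instead unwinds the convex combination back to a specific feature vector $v_i$) and you explicitly dispose of the non--full-dimensional case, which the paper leaves implicit.
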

\begin{proof}
($\Rightarrow$) Suppose $0 \not= \mathbf{h}^\dagger \in \mathbb{R}^k$ exists such that for all $(x, y) \in D$, $y$ minimizes $G(x, s, \mathbf{h}^\dagger)$ as a function of $s$. To the contrary, assume $0$ is in the interior of $\mathcal{N}(D)$. Therefore, there is an open ball of radius $\delta > 0$ centered at $0 \in \mathbb{Z}^k$ completely contained in $\mathcal{N}(D)$, i.e.
\begin{equation}
B_\delta(0) \subset \mathcal{N}(D).
\end{equation}
Let $$p = - (\delta/2)\frac{\mathbf{h}^\dagger}{\|\mathbf{h}^\dagger\|}.$$ It is clear that $p \in B_\delta(0) \subset \mathcal{N}(D)$ since $\|p\| = \delta/2 < \delta$. Therefore, $p$ can be written as a convex linear combination of the feature vectors in 
\begin{equation}\label{equ:totalensemble}
\mathcal{F}(D) := \bigcup_{(x, y) \in D} \left\{\mathcal{F}(x)\ominus c(x, y) \right\} = \left\{v_1, \ldots, v_N \right\}, 
\end{equation}
i.e.
\begin{align}
\exists\; \ \alpha_1, \ldots \alpha_N \geq 0:\; \ \alpha_1 v_1 + \cdots + \alpha_N v_N &= p,\\
\alpha_1 + \cdots + \alpha_N &= 1. 
\end{align}
Note that
\begin{equation}\label{equ:negative}
\left\langle p, \mathbf{h}^\dagger\right\rangle = -(\delta/2) \|\mathbf{h}^\dagger\| < 0.
\end{equation}
Therefore, there is $1 \leq i \leq N$, such that $\left\langle v_i, \mathbf{h}^\dagger\right\rangle < 0$ for otherwise,
\begin{equation}
\left\langle p, \mathbf{h}^\dagger\right\rangle = \sum_{i=1}^N \alpha_i \left\langle v_i, \mathbf{h}^\dagger\right\rangle \geq 0,
\end{equation}
which would be a contradiction with (\ref{equ:negative}). Since $v_i \in \mathcal{F}(D)$ in (\ref{equ:totalensemble}), there is $(x, y) \in D$ such that $v_i \in \mathcal{F}(x)\ominus c(x, y)$. It is now sufficient to note that $v'_i = v_i + c(x, y) \in \mathcal{F}(x)$ and $\left\langle v'_i, \mathbf{h}^\dagger\right\rangle < \left\langle c(x, y), \mathbf{h}^\dagger\right\rangle = G(x, y, \mathbf{h}^\dagger)$ which is a contradiction with our assumption that $y$ minimizes $G(x, s, \mathbf{h}^\dagger)$ as a function of $s$. \\

\noindent ($\Leftarrow$) Suppose $0$ is on the boundary of $\mathcal{N}(D)$. Note that for all $(x, y) \in D$, $0 \in \partial \mathcal{N}_y(x)$.  We construct a nonzero $\mathbf{h}^\dagger \in \mathbb{R}^k$ such that for all $(x, y) \in D$, $G(x, y, \mathbf{h}^\dagger) = \min_{s\in \mathcal{E}(x)} G(x, s, \mathbf{h}^\dagger)$. Since $\mathcal{N}(D)$ is convex, it has a supporting hyperplane $\mathcal{H}$, which passes through $0$. Let the positive normal to $\mathcal{H}$ be $\mathbf{h}^\dagger$, i.e. $\langle \mathcal{N}(D), \mathbf{h}^\dagger\rangle \subset [0, \infty)$. Therefore, $\min_{p \in \mathcal{N}(D)} \langle p, \mathbf{h}^\dagger\rangle = \langle 0, \mathbf{h}^\dagger\rangle = 0,$
which implies that for all $(x, y) \in D$,
$\min_{p \in \mathcal{N}_y(x)} \langle p, \mathbf{h}^\dagger\rangle = \langle 0, \mathbf{h}^\dagger\rangle = 0,$
or equivalently, $G(x, y, \mathbf{h}^\dagger) = \min_{s \in \mathcal{E}(x)} G(x, s, \mathbf{h}^\dagger)$. \qed
\end{proof}
The proof above is constructive; hence using a similar argument, we characterize all learned energy parameters in the following theorem. 
\begin{theorem}\label{thm:cone}
Let
\begin{equation}
H(D) := \left\{ \mathbf{h}^\dagger \in \mathbb{R}^k \ | \ \mathbf{h}^\dagger \not = 0, G(x, y, \mathbf{h}^\dagger) = \min_{s \in \mathcal{E}(x)} G(x, s, \mathbf{h}^\dagger)\;\ \ \forall (x, y) \in D \right\}.
\end{equation}
In that case, $H(D)$ is the set of vectors $n \in \mathbb{R}^k$ orthogonal to the supporting hyperplanes of $\mathcal{N}(D)$ such that $\langle \mathcal{N}(D), n\rangle \subset [0, \infty)$. Moreover, that is the convex cone generated by the inward normal vectors to those facets of $\mathcal{N}(D)$ that are incident to $0$.
\end{theorem}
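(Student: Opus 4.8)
The plan is to prove the characterization of $H(D)$ in two stages, mirroring and extending the argument already used for Theorem~\ref{thm:cond}. First I would establish that $H(D)$ equals the set of $n \neq 0$ with $\langle \mathcal{N}(D), n\rangle \subset [0,\infty)$ and $\langle 0, n\rangle = 0$ (the latter being automatic). The forward inclusion reuses the ($\Rightarrow$) direction of Theorem~\ref{thm:cond} essentially verbatim: if $\mathbf{h}^\dagger \in H(D)$ but there were some $p \in \mathcal{N}(D)$ with $\langle p, \mathbf{h}^\dagger\rangle < 0$, then writing $p$ as a convex combination of points of $\mathcal{F}(D)$ produces some $v_i$ with $\langle v_i, \mathbf{h}^\dagger\rangle < 0$, and translating back into $\mathcal{F}(x)$ contradicts that $y$ minimizes $G(x,\cdot,\mathbf{h}^\dagger)$; so $\langle \mathcal{N}(D), \mathbf{h}^\dagger\rangle \subset [0,\infty)$, and since $0 \in \mathcal{N}(D)$ we get $\min = 0$. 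The reverse inclusion is the ($\Leftarrow$) computation: if $\langle \mathcal{N}(D), n\rangle \subset [0,\infty)$ then for each $(x,y)$, $\min_{p \in \mathcal{N}_y(x)}\langle p, n\rangle = 0 = \langle 0, n\rangle$, which unwinds to $G(x,y,n) = \min_s G(x,s,n)$, i.e. $n \in H(D)$. The hyperplane $\{q : \langle q, n\rangle = 0\}$ is then a supporting hyperplane of $\mathcal{N}(D)$ through $0$ with $n$ its inward normal, so the first claim follows.

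Next I would identify this set with the convex cone generated by the inward normals to the facets of $\mathcal{N}(D)$ incident to $0$. The containment of that cone in $H(D)$ is easy: a nonnegative combination $n = \sum \lambda_j n_j$ of inward facet normals $n_j$ at facets through $0$ satisfies $\langle \mathcal{N}(D), n_j\rangle \subset [0,\infty)$ for each $j$ (each $n_j$ defines a supporting hyperplane through $0$), hence $\langle \mathcal{N}(D), n\rangle \subset [0,\infty)$, and $n \neq 0$ provided the $\lambda_j$ are not all zero; one should note the cone is nontrivial precisely because $0 \in \partial\mathcal{N}(D)$ guarantees at least one such facet exists. For the opposite containment, take $n \neq 0$ with $\langle \mathcal{N}(D), n\rangle \subset [0,\infty)$; the face of $\mathcal{N}(D)$ exposed by $n$, namely $F_n := \{q \in \mathcal{N}(D) : \langle q, n\rangle = 0\}$, is a nonempty face containing $0$. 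The standard normal-fan argument for polytopes says the normal cone at the vertex $0$ (assuming $0$ is a vertex of $\mathcal{N}(D)$, or more generally at the minimal face containing $0$) is exactly the cone generated by the inward normals of the facets incident to $0$, and every $n$ with $\langle \mathcal{N}(D), n\rangle \subset [0,\infty)$ lies in that normal cone. So $H(D) \cup \{0\}$ equals that normal cone, and removing $0$ leaves the stated description.

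The main obstacle I anticipate is the bookkeeping around degeneracy: $0$ need not be a vertex of $\mathcal{N}(D)$ — it could lie in the relative interior of a higher-dimensional face, or $\mathcal{N}(D)$ could be lower-dimensional than $k$ (indeed $\mathcal{N}(D)$ always contains $0$ and is often not full-dimensional). In that case ``facets incident to $0$'' and ``inward normals'' need to be interpreted carefully: one should pass to the minimal face $F_0$ containing $0$, observe that any $n \in H(D)$ is orthogonal to $F_0$, and then work within the normal cone of $F_0$, which is generated by the inward normals of the facets containing $F_0$ — equivalently the facets incident to $0$. I would handle this by invoking the structure of the normal fan of a polytope (each nonempty face $F$ has a normal cone $N(F)$ generated by the normals of the facets containing $F$, and the $N(F)$ tile $\mathbb{R}^k$), applied to $F = F_0$; the only point needing care is that when $\mathcal{N}(D)$ is not full-dimensional the normal cone contains the entire orthogonal complement of its affine hull, so the phrase ``convex cone generated by the inward normals'' should be read as including that lineality — a caveat worth stating explicitly, though in the applications of interest $\mathcal{N}(D)$ is typically full-dimensional and $0$ is a genuine vertex, so the clean statement holds.
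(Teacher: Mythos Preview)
Your proposal is correct and is essentially the approach the paper has in mind. In fact the paper gives no separate proof of this theorem at all: it simply remarks that the proof of Theorem~\ref{thm:cond} is constructive and that ``a similar argument'' yields the characterization, then states the result. Your first stage is precisely that similar argument, carried out in both directions, and your second stage (the normal-fan identification of the supporting-hyperplane normals at $0$ with the cone generated by the inward facet normals at $0$) supplies the standard polytope-geometry fact that the paper leaves entirely implicit. Your discussion of the degenerate cases---$0$ not a vertex, $\mathcal{N}(D)$ not full-dimensional---goes beyond anything the paper addresses and is a useful caveat; the paper tacitly works in the generic situation (and indeed notes later that ``in practice, applicable cases have $0$ as a vertex'').
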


\subsection{Compatible training set}
We say that a training set $D$ is \emph{compatible} if the sufficient condition for learnability is satisfied when $D$ is considered. However, the sufficient condition for the entire training set is often not satisfied in practice, for example when the feature vector is low dimensional. We would like to find a compatible subset of $D$ to estimate the energy parameters. A natural quest is to find the maximal compatible subset of $D$. In the following section, we show that even when the Newton polytopes with polynomial complexity for all of the sequences in $D$ are given, selection of a maximal compatible subset of $D$ is NP-hard in arbitrary dimensions.

\subsection{NP-hardness of maximal compatible subset}
\begin{problem}[Maximal Compatible Subset (MCS)]
We are given a collection of convex polytopes $A = \left\{ \mathcal{P}_1, \mathcal{P}_2, \ldots, \mathcal{P}_n \right\}$ in $\mathbb{R}^k$ such that $0 \in \mathbb{R}^k$ is on the boundary of every $\mathcal{P}_i$, i.e. $0 \in \partial \mathcal{P}_i$ for $i=1, 2, \ldots, n$. The desired output is a maximal subcollection $B = \left\{ \mathcal{P}_{j_1}, \mathcal{P}_{j_2}, \ldots, \mathcal{P}_{j_m} \right\} \subseteq A$ with the following property
\begin{equation}
0 \in \partial\ \mbox{\emph{conv}} \left\{ \bigcup_{\mathcal{P} \in B} \mathcal{P} \right\}.
\end{equation}

\end{problem}

\begin{theorem}\label{thm:nphard}
MCS is NP-hard.
\end{theorem}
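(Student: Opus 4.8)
Reading \emph{maximal} as largest, i.e.\ of maximum cardinality (an inclusion-maximal compatible subfamily being trivially obtained by greedily adding polytopes while compatibility is preserved), the plan is a direct polynomial-time reduction to MCS from the densest-hemisphere problem, using only MCS instances in which each $\mathcal{P}_i$ is a segment emanating from the origin. The key first step is a dual reading of the defining condition: since $0\in\mathcal{P}_i$ for every $i$, and hence $0\in\mathrm{conv}\{\bigcup_{\mathcal{P}\in B}\mathcal{P}\}$, we have $0\in\partial\,\mathrm{conv}\{\bigcup_{\mathcal{P}\in B}\mathcal{P}\}$ iff $0$ is not interior to this hull, iff the hull has a supporting hyperplane through the origin, iff there exists $0\ne h\in\mathbb{R}^k$ with $\langle v,h\rangle\ge0$ for every vertex $v$ of every $\mathcal{P}\in B$. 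So, given arbitrary nonzero vectors $v_1,\dots,v_n\in\mathbb{R}^k$, I would form the MCS instance $\mathcal{P}_i=\mathrm{conv}\{0,v_i\}$; here $0$ is an endpoint of each segment, so $0\in\partial\mathcal{P}_i$ and the instance is legitimate. By the reformulation, for $S\subseteq\{1,\dots,n\}$ the selected subfamily satisfies $0\in\partial\,\mathrm{conv}(\{0\}\cup\{v_i:i\in S\})$ exactly when $\{v_i:i\in S\}$ lies in a single closed halfspace through the origin, i.e., after normalisation, in a closed hemisphere of $S^{k-1}$.

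Thus, restricted to these instances, MCS is precisely the problem of choosing a largest subset of $n$ given points of a sphere (whose dimension is part of the input) contained in one closed hemisphere --- the NP-hard densest-hemisphere problem of Johnson and Preparata --- and therefore MCS is NP-hard. Two remarks fit here: the decision version of MCS is in NP, since to verify a candidate $B$ one checks whether the polyhedral cone $\{h:\langle v,h\rangle\ge0\ \forall v\in\bigcup_{\mathcal{P}\in B}\mathcal{P}\}$ is nontrivial, a linear-programming test, so MCS is in fact NP-complete; and if one insists that each $\mathcal{P}_i$ be full-dimensional, one replaces each segment by a sufficiently thin full-dimensional simplex with apex at $0$ pointing along $v_i$, which changes none of the hull/boundary relations used above.

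Since the reduction itself is a one-line translation once the boundary condition is rewritten in terms of halfspaces through the origin, the substance is concentrated in two points, and I expect the second to be the only real obstacle. First, the dimension must genuinely be part of the input: for fixed $k$ a largest compatible subfamily can be computed in time $n^{O(k)}$ by enumerating the $O(n^{k-1})$ faces of the arrangement of the hyperplanes $\{h:\langle v_i,h\rangle=0\}$ and testing one representative $h$ per face, so no dimension-independent hardness can hold and the reduction must be permitted to use $k$ as large as the size of the source instance. Second, one must be careful which hemisphere variant is invoked: our weak inequality $\langle v,h\rangle\ge0$ matches the \emph{closed} hemisphere problem, but if one would rather reduce from the open hemisphere problem, or from a maximum feasible subsystem of inhomogeneous inequalities $\langle a_i,x\rangle\le b_i$, then the homogenisation $v_i=(a_i,-b_i)$, $h=(x,h_0)$ introduces a spurious recession direction $h_0\le0$ that can satisfy constraints vacuously and must be neutralised --- for instance by appending many copies of the blocking point $(0,\dots,0,-1)$ to force $h_0\le0$, together with a generic perturbation of the $v_i$ to destroy degenerate equalities --- so that existence of a nonzero homogeneous witness coincides with feasibility of the original system. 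Getting that bookkeeping right, rather than any combinatorial depth, is the crux.
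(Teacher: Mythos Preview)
Your argument is sound and follows a substantially different route from the paper's. The paper reduces from \textsc{Max 3-Sat}: for each variable $w_j$ it builds opposing ``arrow'' polytopes $\mathcal{T}^{\tt 0}_j$ and $\mathcal{T}^{\tt 1}_j$ along the $j$th coordinate axis, designed so that $0$ lies in the interior of $\mathrm{conv}(\mathcal{T}^{\tt 0}_j\cup\mathcal{T}^{\tt 1}_j)$; each clause then contributes seven polytopes, one per satisfying assignment of its three variables, each the convex hull of the three corresponding arrows. A compatible subfamily can contain at most one polytope per clause (otherwise two opposing arrows on some axis would swallow the origin), and the choices must agree on shared variables, so the maximum compatible size equals the \textsc{Max 3-Sat} optimum. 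Your reduction is far shorter: once the boundary condition is rewritten as existence of a supporting halfspace at $0$, MCS restricted to segment polytopes $\mathrm{conv}\{0,v_i\}$ is literally the closed densest-hemisphere problem. What the paper's construction buys is self-containment from a canonical source and naturally full-dimensional gadgets; what yours buys is economy, an immediate link to a classical geometric problem, and the NP-membership remark, which the paper does not make. You are right that the open-versus-closed hemisphere distinction is the only place demanding care---Johnson--Preparata's hardness is usually stated for open hemispheres---and the homogenisation-with-blocking-copies device you sketch (or, more directly, an appeal to NP-hardness of the maximum feasible subsystem problem for non-strict homogeneous inequalities) suffices to close that gap.
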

\begin{proofsketch}
We use a direct reduction to \textsc{Max 3-Sat}. Let $\Phi(w_1, w_2, \ldots, w_k) = \bigwedge_{i=1}^n \tau_i $ be a formula in the 3-conjunctive normal form with clauses
\begin{equation}
\tau_i = q^1_i\ w_{a_i} \vee q^2_i\ w_{b_i} \vee q^3_i\ w_{c_i},
\end{equation}
where $w_j$ are binary variables, $1 \leq a_i, b_i, c_i \leq k$, and $q^1_i, q^2_i, q^3_i \subseteq \{\neg\}$. For every clause $\tau_i$, we build 8 convex polytopes $\mathcal{P}^{\tt 000}_i, \ldots, \mathcal{P}^{\tt 111}_i$ in $\mathbb{R}^k$, where the superscripts are in binary. To achieve that, we first build a \emph{true} $\mathcal{T}^{\tt 1}_j$ and a \emph{false} $\mathcal{T}^{\tt 0}_j$ convex polytope for every variable $w_j$. Let $\left\{e_1, e_2, \ldots, e_k\right\}$ be the standard orthonormal basis for $\mathbb{R}^k$ and define
\begin{eqnarray}
\mathcal{T}^{\tt 1}_j &=& \mbox{conv}\left\{ 0, M e_j\pm e_1, \ldots, M e_j\pm e_{j-1}, M e_j\pm e_{j+1}, \ldots, M e_j\pm e_k  \right\},\\
\mathcal{T}^{\tt 0}_j &=& \mbox{conv}\left\{ 0, -M e_j\pm e_1, \ldots, -M e_j\pm e_{j-1}, -M e_j\pm e_{j+1}, \ldots, -M e_j\pm e_k  \right\} = -\mathcal{T}^{\tt 1}_j,\\
\end{eqnarray}
for an arbitrary $1 \ll M \in \mathbb{Z}$ and $j=1, 2, \ldots, k$. Note that $\mathcal{T}$ is a $k$-dimensional narrow arrow with polynomial complexity, and its vertices are on the integer lattice and can be computed in polynomial time; see Figure \ref{fig:arrow}. Moreover, $\{\mathcal{T}^{\tt 0}_j, \mathcal{T}^{\tt 1}_j \}$ is incompatible, i.e. $0 \not \in \partial\ \mbox{conv}\{\mathcal{T}^{\tt 0}_j \cup \mathcal{T}^{\tt 1}_j \}$. Let
\begin{equation}
\mathcal{P}^{\tt d_1d_2d_3}_i = \mbox{conv}\left\{ \mathcal{T}^{\tt d_1}_{a_i} \cup \mathcal{T}^{\tt d_2}_{b_i} \cup \mathcal{T}^{\tt d_3}_{c_i} \right\}.
\end{equation}
Note that eventhough convex hull is NP-hard in general \cite{Dye83}, $\mathcal{P}^{\tt d_1d_2d_3}_i$ can be computed in polynomial time. Essentially, the vertex set of $\partial \mathcal{P}^{\tt d_1d_2d_3}_i$ is the union of vertices of $\partial \mathcal{T}^{\tt d_j}_{a_i}$. Assume $({\tt \sigma^1_i, \sigma^2_i, \sigma^3_i}) \in \{{\tt 0, 1}\}^3$ is that assignment to variables $(w_{a_i}, w_{b_i}, w_{c_i})$ which makes $\tau_i$ false, and let
\begin{equation}
A_i = \left\{\mathcal{P}^{\tt 000}_i, \ldots, \mathcal{P}^{\tt 111}_i\right\} \backslash \left\{\mathcal{P}^{\tt \sigma^1_i\sigma^2_i\sigma^3_i}_i\right\}.
\end{equation}
More precisely,
\begin{equation}
{\tt \sigma^j_i} = \left\{ \begin{array}{l}
{\tt 0} \quad \text{if $q^j_i = \emptyset$,} \\
{\tt 1} \quad \text{otherwise.} 
\end{array} \right.
\end{equation}
Note that for every $\mathcal{P}^{\tt d_1d_2d_3}_i \in A_i$, assignment of $\tt (d_1, d_2, d_3)$ to variables $(w_{a_i}, w_{b_i}, w_{c_i})$ makes $\tau_i$ true. Finally, define the input to the MCS as 
\begin{equation}
A = \bigcup_{i=1}^n A_i,
\end{equation}
and assume $B$ is a maximal subset of $A$ with the property
\begin{equation}
0 \in \partial\ \mbox{conv} \left\{ \bigcup_{\mathcal{P} \in B} \mathcal{P} \right\}.
\end{equation}

\begin{figure}[t!]
\begin{center}
\includegraphics[width=4in]{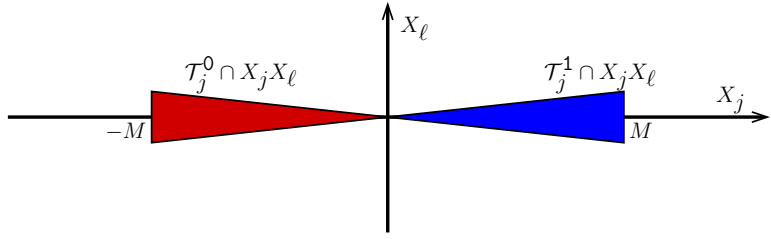}
\end{center}
\caption{Intersection of $\mathcal{T}^{\tt 0}_j$ and $\mathcal{T}^{\tt 1}_j$ polytopes with the $X_jX_\ell$ plane in the proof of Theorem \ref{thm:nphard}. }
\label{fig:arrow}
\end{figure}

It is sufficient to show that $|B| = $ \textsc{Max 3-Sat}($\Phi$). First, we prove that $|B| \geq $ \textsc{Max 3-Sat}($\Phi$). Suppose \textsc{Max 3-Sat}($\Phi$) = $m$, and the assignment $({\tt g_1, g_2, \ldots, g_k}) \in \{{\tt 0, 1}\}^k$ renders $\tau_{i_1}, \tau_{i_2}, \ldots, \tau_{i_m}$ true. Define 
\begin{equation}
B_m = \left\{\mathcal{P}^{{\tt g}_{a_i}{\tt g}_{b_i}{\tt g}_{c_i}}_{i} \ |\ i \in \{i_1, i_2, \ldots, i_m\}  \right\} \subseteq A,
\end{equation}
and verify that it is compatible, i.e. $0 \in \partial\ \mbox{conv} \left\{ \bigcup_{\mathcal{P} \in B_m} \mathcal{P} \right\}$, since
\begin{equation}
\mbox{conv} \left\{ \bigcup_{\mathcal{P} \in B_m} \mathcal{P} \right\} = \mbox{conv} \left\{ \bigcup_{j=1}^k \mathcal{T}^{\tt g_j}_j \right\}
\end{equation}
spans only (approximately) one orthant of $\mathbb{R}^k$. Since $B$ is a maximal subset of $A$ with the compatibility property, $|B| \geq |B_m| = m$. Second, we prove that $|B| \leq $ \textsc{Max 3-Sat}($\Phi$). We show that for every $1 \leq i \leq n$, at most one $\mathcal{P}_i \in B$. To the contrary, suppose for some $1 \leq i \leq n$, $\mathcal{P}^{\tt d_1d_2d_3}_i,  \mathcal{P}^{\tt d'_1d'_2d'_3}_i \in B$. Without loss of generality, assume $d_1 \not = d'_1$. In that case, 
\begin{equation}
\begin{split}
0 & \in \mbox{int}\left\{\mbox{conv}\left[\mathcal{T}^{\tt 0}_{a_i} \cup \mathcal{T}^{\tt 1}_{a_i} \right]\right\} \\ 
& = \mbox{int}\left\{\mbox{conv}\left[\mathcal{T}^{\tt d_1}_{a_i} \cup \mathcal{T}^{\tt d'_1}_{a_i} \right]\right\} \\
& \subseteq \mbox{int}\left\{\mbox{conv}\left[\mathcal{P}^{\tt d_1d_2d_3}_i \cup \mathcal{P}^{\tt d'_1d'_2d'_3}_i \right]\right\} \\
& \subseteq \mbox{int}\left[\mbox{conv}\left\{ \bigcup_{\mathcal{P} \in B} \mathcal{P} \right\}\right],
\end{split}
\end{equation}
which is a contradiction. Above $\mbox{int}$ denotes the interior. Similarly, $B$ induces a consistent assignment to the variables, that makes $|B|$ clauses true. \qed
\end{proofsketch}

\subsection{Randomized greedy algorithm}
Since the maximal compatible subset problem is NP-hard in general, we used a randomized greedy algorithm. In the $i^\text{th}$ iteration, our algorithm starts with a seed $B_i$ which is a random subset of $A$, the input set of polytopes. In our case, $B_i$ is a single element subset. The algorithm iteratively keeps adding other members of $A$ to $B_i$ as long as $0$ remains as a vertex of the convex hull of union of all of the polytopes in $B_i$. Note that Theorem \ref{thm:cond} requires $0$ to be on the boundary of the convex hull, not necessarily on a vertex. However in practice, applicable cases have $0$ as a vertex. Finally, the $B_i$ with maximum number of elements is returned in the output; see Algorithm \ref{alg:rgmcs}.

\begin{algorithm}[t!]
\begin{algorithmic}
\State {\bf Input:} $A = \left\{\mathcal{P}_1, \mathcal{P}_2, \ldots, \mathcal{P}_n \right\}$
\State {\bf Output:} $B \subseteq A$
\State
\State $i \gets$ 1
\State $B \gets \emptyset$
\While {$i < $ \textsc{MaxIterations}} \Comment{\textsc{MaxIterations} is a static/dynamic constant}
	\State $A' \gets \left\{\mathcal{P} \in A\ |\ 0 \mbox{ is a vertex of } \partial \mathcal{P} \right\}$ \Comment{remove inapplicable polytopes}
	\State $B_i \gets \emptyset$ \Comment{empty subcollection}
	\State $\mathcal{Q} \gets \emptyset$ \Comment{empty polytope}
 
	\While{$A' \not = \emptyset$}
		\State $\mathcal{P}_r \gets \textsc{Random}(A')$ \Comment{pick a random polytope from $A'$}
		\State $A' \gets A' \backslash \left\{ \mathcal{P}_r \right\}$
		\State $C \gets B_i\cup\{\mathcal{P}_r \}$
		\State $\mathcal{R} \gets \mbox{conv}\left( \mathcal{Q} \cup \mathcal{P}_r \right)$
		\If {$0$ is a vertex of $\partial \mathcal{R}$}
			\State $B_i \gets C$ \Comment{greedily expand the subcollection}
			\State $\mathcal{Q} \gets \mathcal{R}$ \Comment{update the convex hull of the union}
		\EndIf
	\EndWhile
	\If {$|B| < |B_i| $}
		\State $B \gets B_i$
	\EndIf
	\State $i \gets i+1$
\EndWhile 
\end{algorithmic}
\caption{Randomized Greedy Maximal Compatible Subset}\label{alg:rgmcs}
\end{algorithm}

\begin{figure}[t!]
\begin{center}
\vspace*{-1.0in}
\includegraphics[width=4in]{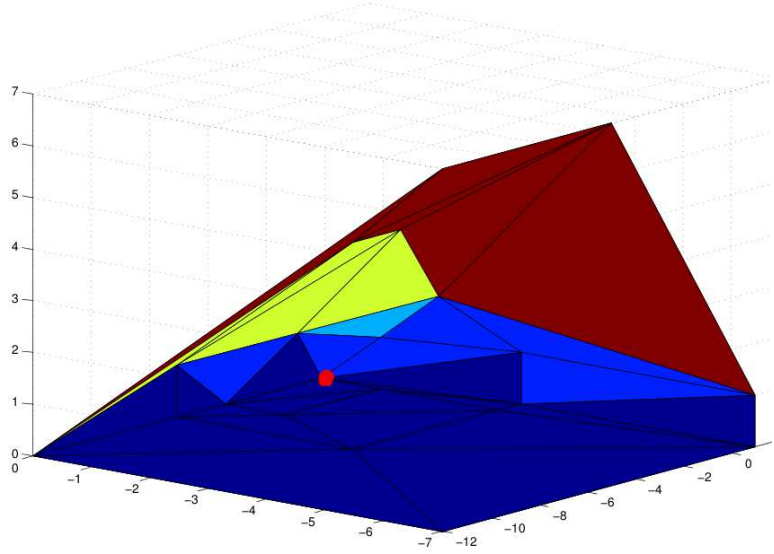}
\vspace*{-1.0in}
\end{center}
\caption{$\mathcal{N}(D')$ where $D'$ consists of three sequence-structure pairs: PDB\_00434, PDB\_00200, and PDB\_00876. The dot shows the origin $0$. Secondary structures in $D'$ are shown in Figure \ref{fig:structures}.}
\label{fig:polytope}
\end{figure}

\begin{figure}[t!]
\begin{center}
\begin{tabular}{|ccc|}
\hline
&&\\
\hspace{-1.1in}\parbox{2in}{\vspace{-0.5in}\includegraphics[scale=0.25]{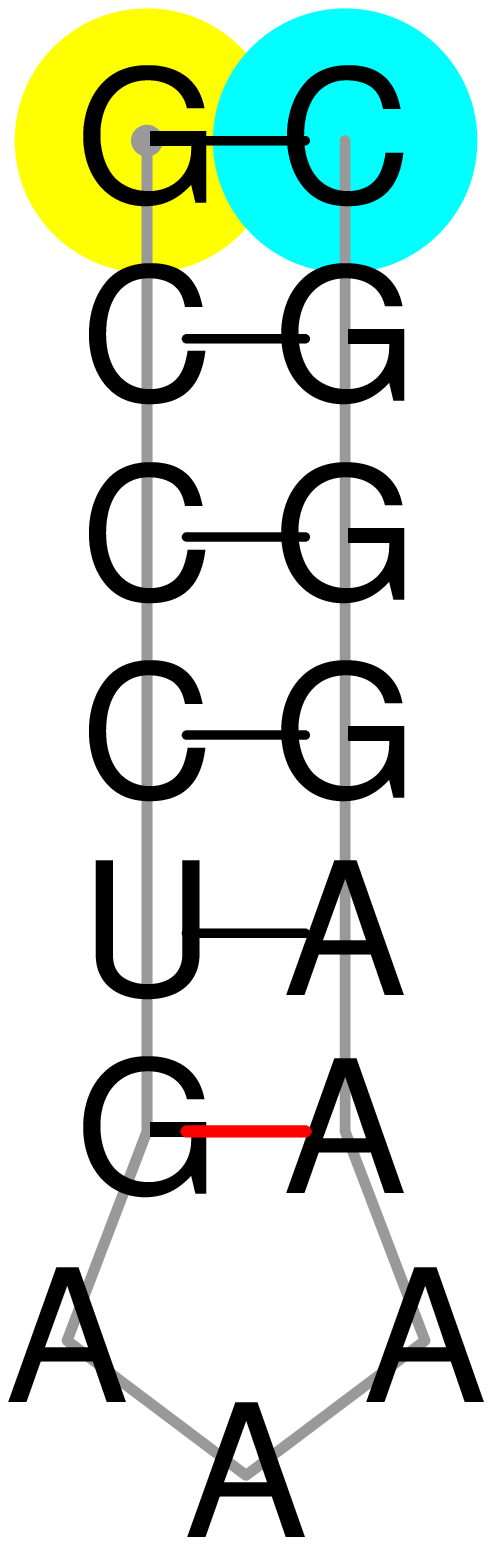}\vspace{-0.6in}} &
\hspace{-1.5in}\parbox{2in}{\vspace{-0.5in}\includegraphics[scale=0.35]{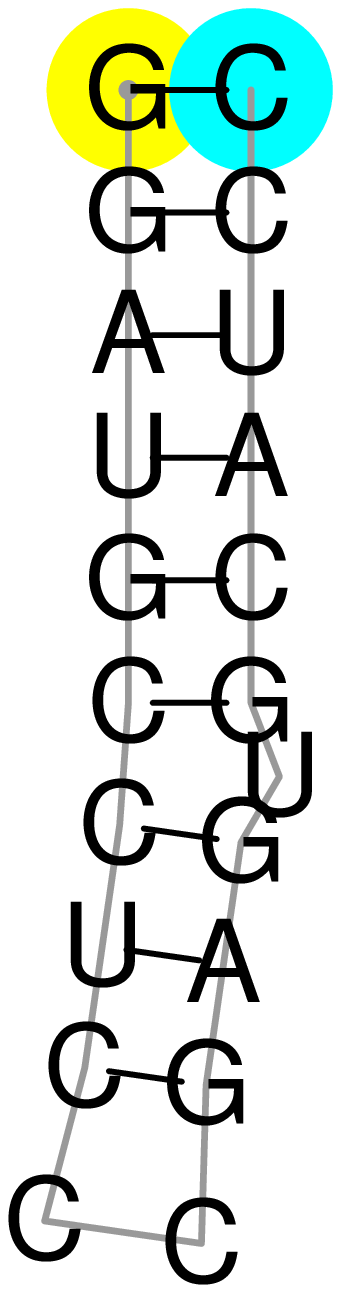}\vspace{-1.1in}}&
\hspace{-1.3in}\parbox{2in}{\vspace{-0.5in}\includegraphics[scale=0.35]{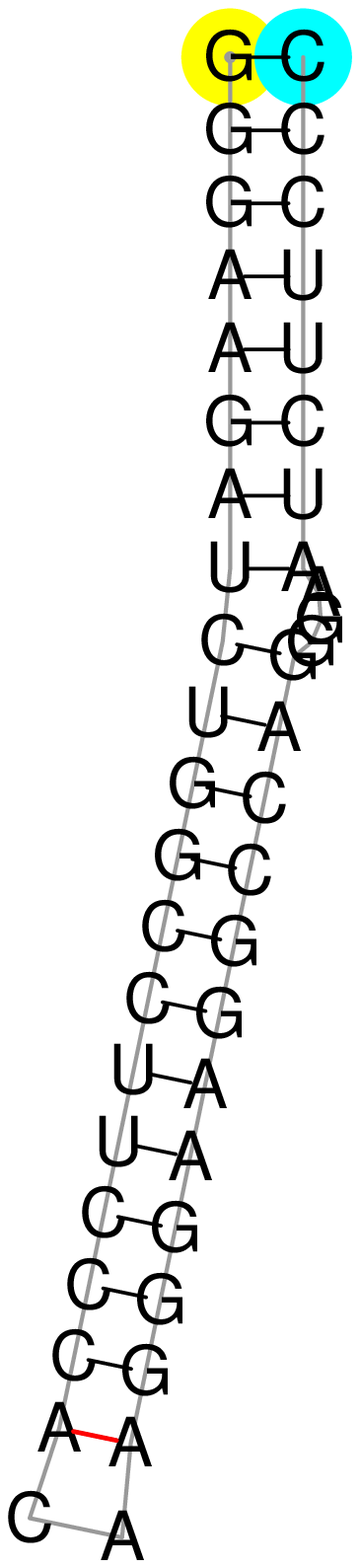}\vspace{-1.0in}} \\
PDB\_00434 & PDB\_00200 & PDB\_00876 \\
\hline
\end{tabular}
\end{center}
\caption{Structures of PDB\_00434, PDB\_00200, and PDB\_00876 obtained from RNA STRAND v2.0 website \cite{Andronescu08}. }
\label{fig:structures}
\end{figure}

% Results and Discussion can be combined.
\section{Results}
We used 2277 unpseudoknotted RNA sequence-structure pairs from RNA STRAND v2.0 database as our data set $D$. RNA STRAND v2.0 is a convenient source of RNA sequences and structures selected from various Rfam families \cite{Burge13} and contains known RNA secondary structures of any type and organism, particularly with and without pseudoknots \cite{Andronescu08}. There are 2334 pseudoknot-free RNAs in the RNA STRAND database. We sorted them based on their length and selected the first  2277 ones for computational convenience. We excluded pseudoknotted structures because our current implementation is incapable of considering pseudoknots. Some sequences in the data set allow only A-U base pairs (not a single C-G or G-U pair), in which case the Newton polytope degenerates into a line.

We demonstrate the results for the separate A-U, C-G, and G-U base pair counting energy model similar to our previous model \cite{Forouzmand13}. In that case, the feature vector $$c(x, s)=(c_1(x, s), c_2(x, s), c_3(x, s))$$ is three dimensional: $c_1(x, s)$ is the number of A-U, $c_2(x, s)$ the number of C-G, and $c_3(x, s)$ the number of G-U base pairs in $s$. First, we computed $c(x, y)$ and used our Newton polytope program to compute $\mathcal{N}(x)$ for each $(x, y) \in D$ \cite{Forouzmand13}. For completeness, we briefly include our dynamic programming algorithm which starts by computing the Newton polytope for all subsequences of unit length, followed by all subsequences of length two and more up to the Newton polytope for the entire sequence $x$. We denote the Newton polytope of the subsequence $n_i\cdots n_j$ by $\mathcal{N}(i, j)$, i.e.
\begin{equation}
\mathcal{N}(i, j) := \mathcal{N}(n_i\cdots n_j).
\end{equation} 
The following dynamic programming yielded the result
\begin{equation}\label{equ:dp}
\mathcal{N}(i, j) = \mbox{conv}\left\{\bigcup  
\left[ \begin{array}{l}
    \mathcal{N}(i,\ell) \oplus \mathcal{N}(\ell+1,j),\; \ \ i \leq \ell \leq j-1\\
    \left\{(1, 0, 0) \right\} \oplus \mathcal{N}(i+1,j-1) \quad \text{if $n_in_j$ = AU$\ |\ $UA}\\
    \left\{(0, 1, 0) \right\} \oplus \mathcal{N}(i+1,j-1) \quad \text{if $n_in_j$ = CG$\ |\ $GC}\\
    \left\{(0, 0, 1) \right\} \oplus \mathcal{N}(i+1,j-1) \quad \text{if $n_in_j$ = GU$\ |\ $UG}\\
    \end{array} \right] \right\},
\end{equation}
with the base case $\mathcal{N}(i, i) = \left\{ (0, 0, 0) \right\}$. Above $\oplus$ is the Minkowski sum. We then computed $\mathcal{N}_y(x)$ by translating the Newton polytope so that $c(x, y)$ moves to the origin $0$, to obtain $A = \{\mathcal{N}_y(x)\ |\ (x, y) \in D \}$ as the input to Algorithm \ref{alg:rgmcs}. 

We then removed those polytopes in $A$ that do not have $0$ as one of their boundary vertices, to obtain $A'$ in Algorithm \ref{alg:rgmcs}. It turns out that only 126 sequences out of the initial 2277 remain in $A'$. Note that our condition here is more stringent than the necessary condition in \cite{Forouzmand13}, and that is why fewer sequences satisfy this condition. After 100 iterations (\textsc{MaxIterations} = 100) which took less than a minute, the algorithm returned 3 polytopes that are compatible, i.e. the origin $0$ is a vertex of the boundary of convex hull of union of three polytopes. They correspond to sequences PDB\_00434 (length: 15nt; bacteriophage HK022 nun-protein-nutboxb-RNA complex \cite{Faber01}), PDB\_00200 (length: 21nt; an RNA hairpin derived from the mouse 5' ETS that binds to the two N-terminal RNA binding domains of nucleolin \cite{Finger03}), and PDB\_00876 (length: 45nt; solution structure of the HIV-1 frameshift inducing element \cite{Staple05}) which are experimentally verified 
by NMR or X-ray. 

The origin is incident to 5 facets of the resulting convex hull, shown in Figure \ref{fig:polytope}, the inward normal vector to which are in the rows of
\begin{equation}
J =
\begin{bmatrix}
-0.3162 & -0.9487 & 0 \\
-1  & 0 & 0 \\
0 & 0 & 1 \\
-0.4082 & -0.8165 & -0.4082\\
-0.5774 & -0.5774 & -0.5774\\
\end{bmatrix}
\end{equation}
as explained in Theorem \ref{thm:cone}. The set of those energy parameters that correctly predict the three structures in Figure \ref{fig:structures} is the convex cone generated by these 5 vectors. The Turner model measures the average energies of A-U, C-G, and G-U to be approximately $(-2, -3, -1)$ kcal/mol \cite{MatTur99}. To test whether those energy parameters fall into the convex cone generated by $J$, we solved a convex linear equation. Let \[ \mathbf{h}^\dagger= \begin{bmatrix}-2 \\ -3 \\ -1 \end{bmatrix}. \] We would like to find a positive solution $v \in \mathbb{R}_+^5$ to the linear equation $J^T v = \mathbf{h}^\dagger$. We formulated that as a linear program which
was solved using GNU Octave, and here is the answer: %x =[2.10815  0.00000  0.33340  0.00000  0.00000  1.73190]. Figure show the convex hull union of these 3 sequences.
\[
v =
\left[ {\begin{array}{cc}
2.10815  \\
0.33340\\
0\\
0\\
1.73190\\
\end{array} } \right]. \]
The convex cone generated by $J$ contains the vector $(-2, -3, -1)$. Therefore, our finding is in agreement with the Turner base pairing energies. Note that the three structures are base-pair rich (Figure \ref{fig:structures}).

\section{Discussion}
We further developed the notion of learnability of parameters of an energy model. A necessary and sufficient condition for it was given, and a characterization of the set of energy parameters that realize exact structure prediction followed as a by-product. If an energy model satisfies the sufficient condition, then we say that the training set is compatible. In our case, the RNA STRAND v2.0 training set is not compatible (for the A-U, C-G, G-U base pair counting model). We showed that computing a maximal compatible subset of a set of convex polytopes is NP-hard in general and gave a randomized greedy algorithm for it. The computed set of energy parameters for A-U, C-G, G-U from a maximal compatible subset agreed with  the thermodynamic energies. Complexity of the MCS problem is an open and interesting question, particularly if we treat the dimension of the feature space as a constant. Also, assessing the generalization power of an energy model remains for future work.

\bibliographystyle{unsrt}
\bibliography{ref,pub,main}

\end{document}